\newcommand{\silbot}{${\sf SILBOT}$\xspace}
\newcommand{\algo}{{\textit{WRain}}\xspace}
\newcommand{\W}{$\mathbb{W}$\xspace}
\newcommand{\E}{$\mathbb{E}$\xspace}
\newcommand{\NE}{$\mathbb{NE}$\xspace}
\newcommand{\NW}{$\mathbb{NW}$\xspace}
\newcommand{\SE}{$\mathbb{SE}$\xspace}
\newcommand{\SW}{$\mathbb{SW}$\xspace}
\newcommand{\async}{{\sc Async}\xspace}
\newcommand{\cont}{\sc{contracted}\xspace}
\newcommand{\expd}{\sc{expanded}\xspace}
\begin{document}

\title{Asynchronous Silent Programmable Matter:\\ Line Formation\thanks{A brief announcement about the results contained in this paper appears in the proceedings of the 37th International Symposium on Distributed Computing (DISC) 2023~\cite{NPbf23}.\\ The work has been supported in part by the Italian National Group for Scientific Computation (GNCS-INdAM).}}

\titlerunning{Asynchronous Silent Programmable Matter: Line Formation}
\author{Alfredo~Navarra\inst{1}\orcidID{0000-0001-8547-5934} \and Francesco~Piselli\inst{1}
}
\authorrunning{A.~Navarra, F.~Piselli}
\institute{Dipartimento di Matematica e Informatica, Universit\`a degli Studi di Perugia, Italy. \email{alfredo.navarra@unipg.it}; \email{francesco.piselli@unifi.it}}
\maketitle 
\begin{abstract}
Programmable Matter (PM) has been widely investigated in recent years. It refers to some kind of matter with the ability to change its physical properties (e.g., shape or color) in a programmable way. One reference model is certainly Amoebot, with its recent canonical version (DISC 2021). Along this line, with the aim of simplification and to better 
address concurrency, the \silbot model has been introduced (AAMAS 2020), which heavily reduces the available capabilities of the particles composing the PM. 
In \silbot, in fact, particles
are asynchronous, without any direct means of communication (silent) and without memory of past events (oblivious). 
Within \silbot, we consider the \emph{Line formation} primitive in which particles are required to end up in a configuration where they are all aligned and connected. We propose a simple and elegant distributed algorithm -- optimal in terms of number of movements, along with its correctness proof. 

\keywords{Programmable Matter\and Line Formation\and Asynchrony\and Stigmergy}
\end{abstract}
%
\section{Introduction}
\label{sec:intro}
The design of smart systems intended to adapt and organize themselves in order to accomplish global tasks is receiving more and more interest, especially with the technological advance in nanotechnology, synthetic biology and smart materials, just to mention a few. Among such systems, main attention has been devoted in the recent years to the so-called \emph{Programmable Matter} (PM). This refers to some kind of matter with the ability to change its physical properties (e.g., shape or color) in a programmable way. PM can be realized by means of weak self-organizing computational entities, called \emph{particles}. 

In the early 90s, the interest in PM by the scientific community was mostly theoretical. In fact, the ideas arising within such a context did not find support in technology that was unprepared for building computational devices at micro/nanoscale. Nowadays, instead, nano-technology has greatly advanced and the pioneering ideas on PM could find a practical realization. The production of nano units that integrate computing, sensing, actuation, and some form of motion mechanism are becoming more and more promising. Hence, the investigation into the computational characteristics of PM systems has assumed again a central role, driven by the applied perspective. In fact, systems based on PM can find a plethora of natural applications in many different contexts, including smart materials, ubiquitous computing, repairing at microscopic scale, and tools for minimally invasive surgery. Nevertheless, the investigation on modeling issues for effective algorithm design, performance analysis and study on the feasibility of foundational tasks for PM have assumed a central and challenging role.
Various models have been proposed so far for PM. One that deserves main attention is certainly Amoebot, introduced in~\cite{DGRSS14}. By then, various papers have considered that model, possibly varying some parameters. Moreover, a recent proposal to try to homogenize the referred literature has appeared in~\cite{DRS21}, with the intent to enhance the model with concurrency. 

One of the weakest models for PM, which includes concurrency and eliminates direct communication among particles as well as local and shared memory, is \silbot~\cite{DDDNP20}. The aim has been to investigate on the minimal settings for PM under which it is possible to accomplish basic global tasks in a distributed fashion. Actually, with respect to the Amoebot model, in \silbot particles admit a 2 hops distance visibility instead of just 1 hop distance. Even though this does not seem a generalization of \silbot with respect to Amoebot, the information that can be obtained by means of communications (and memory) in Amoebot may concern particles that are very far apart from each other. Moreover, there are tasks whose resolution has been shown to require just 1 hop distance visibility even in \silbot (see, e.g.~\cite{NPBT23}), perhaps manipulating some other parameters. Toward this direction of simplification and in order to understand the requirements of basic tasks within PM, we aim at studying in \silbot the \emph{Line formation} problem, where particles are required to reach a configuration where they are all aligned (i.e., lie on a same axis) and connected.

\subsection{Related work}
The relevance of the Line formation problem is provided by the interest shown in the last decades within various contexts of distributed computing.
In graph theory, the problem has been considered in~\cite{GRSST14} where the requirement was to design a distributed algorithm that, given an arbitrary connected graph $G$ of nodes with unique labels, converts $G$ into a sorted list of nodes.
In swarm robotics, the problem has been faced from a practical point of view, see, e.g.~\cite{JL14}. The relevance of line or V-shape formations has been addressed in various practical scenarios, as in~\cite{CSG23,C19,YWB18}, based also on nature observation. In fact, ants form lines for foraging activities whereas birds fly in V-shape in order to reduce the air resistance. In robotics, line or V-shape formations might be useful for exploration, surveillance or protection activities. 
Most of the work on robots considers direct communications, memory, and some computational power. For application underwater or in the outerspace, instead, direct communications are rather unfeasible and this motivates the investigation on removing such a capability, see, e.g.~\cite{JW18,SDS15}.
Concerning more theoretical models, the aim has been usually to study the minimal settings under which it is possible to realize basic primitives like Line formation. In~\cite{CGKH21,SC20}, for instance, Line formation has been investigated for (semi-)synchronized robots (punctiform or not, i.e., entities occupying some space) moving within the Euclidean plane, admitting limited visibility, and sharing the knowledge of one axis on direction.
For synchronous robots moving in 3D space, in~\cite{YUKY17}, the plane formation has been considered, which might be considered as the problem corresponding to Line formation for robots moving in 2D.
In~\cite{pairbot2020}, 
robots operate within a triangular grid and Line formation is required as a preliminary step for accomplishing the Coating  of an object. The environment as well as the movements of those robots remind PM. Within Amoebot, Line formation has been approached in~\cite{DGS15}, subject to the resolution of Leader Election, which is based, in turn, on communications and not on movements.
%
\subsection{Outline}
In the next section, we provide all the necessary definitions and notation, along with the formalization of the Line formation problem. In Section~\ref{sec:prel}, we give some preliminary results about the impossibility to resolve Line formation within \silbot. Then, in Section~\ref{sec:algo}, we provide a resolution algorithm for the case of particles sharing a common orientation. In Section~\ref{sec:run}, we show a possible running example about the proposed algorithm. In Section~\ref{sec:corr}, we prove the correctness as well as the optimality in terms of number of moves of the proposed algorithm. Finally, in Section~\ref{sec:concl}, we provide some conclusive remarks and possible directions for future work.

\section{Definitions and notation}
In this section, we review the \silbot model for PM introduced in~\cite{DDDNP20b,DDDNP20}, 
and then we formalize the Line formation problem along with other useful definitions.

In \silbot, particles operate on an infinite triangular grid embedded in the plane. Each node can contain at most one particle. 
Each particle is an automaton with two states, {\cont} or {\expd} (they do not have any other form of persistent memory). In the former state, a particle occupies a single node of the grid while in the latter, the particle occupies one single node and one of the adjacent edges, see, e.g. Figure~\ref{fig:es1}. Hence, a particle always occupies one node, at any time. 
Each particle can sense its surrounding up to a distance of $2$ hops, i.e., if a particle occupies a node $v$, then it can see the neighbors of $v$, denoted by $N(v)$, and the neighbors of the neighbors of $v$. Hence, within its visibility range, a particle can detect empty nodes, {\cont}, and {\expd} particles.

Any positioning of {\cont} or {\expd} particles that includes all $n$ particles composing the system is referred to as a \emph{configuration}. 
Particles alternate between active and inactive periods decided by an adversarial schedule, independently for each particle.

In order to move, a particle alternates between {\expd} and {\cont} states. In particular, a {\cont} particle occupying node $v$ can move to a neighboring node $u$ by expanding along edge $(v,u)$, and then re-contracting on $u$. Note that, if node $u$ is already occupied by another particle then the {\expd} one will reach $u$ only if $u$ becomes empty, eventually, in a successive activation. There might be arbitrary delays between the actions of these two particles. When the particle at node $u$ has moved to another node, the edge between $v$ and $u$ is still occupied by the originally {\expd} particle. In this case, we say that node $u$ is \emph{semi-occupied}. 

\noindent {\em A particle commits itself into moving to node $u$ by expanding in that direction. At the next activation of the same particle, it is constrained to move to node $u$, if $u$ is empty. A particle cannot revoke its expansion once committed.}

The \silbot model introduces a fine grained notion of asynchrony with possible delays between observations and movements performed by the particles. 
This reminds 
the so-called \async schedule designed for theoretical models dealing with mobile and oblivious robots (see, e.g.~\cite{CDN21a,DDNNS15,FPS-macbook19}). 
All operations performed by the particles are non-atomic: there can be delays between the actions of sensing the surroundings, 
computing the next decision (e.g., expansion or contraction), executing the decision. 

The well-established fairness assumption is included, where each particle must be activated within finite time, infinitely often, in any execution of the particle system, see, e.g.,~\cite{FPS-macbook19}.

Particles are required to take deterministic decisions. Each particle may be activated at any time independently from the others. Once activated, a particle looks at its surrounding (i.e., at 2 hops distance) and, on the basis of such an observation, decides (deterministically) its next \textit{action}.

If two {\cont} particles decide to expand on the same edge simultaneously, exactly one of them (arbitrarily chosen by the adversary) succeeds. 

If two particles are {\expd} along two distinct edges incident to a same node $w$, toward $w$, and both particles are activated simultaneously, exactly one of the particles (again, chosen arbitrarily by the adversary) contracts to node $w$, whereas the other particle does not change its {\expd} state according to the commitment constraint described above.

\begin{figure}[t]
        \centering      
  a) \includegraphics[width=4cm]{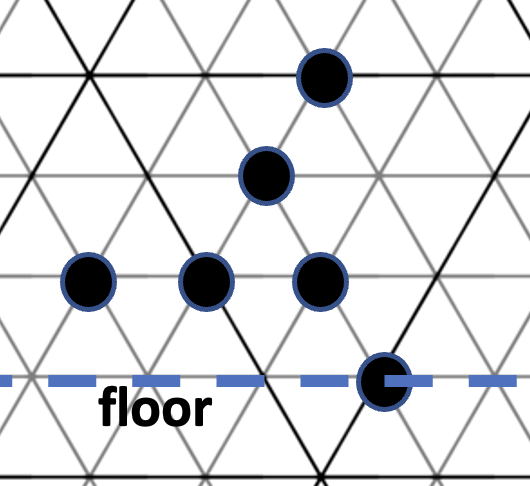}
  		\hspace{5mm}        
  b) \includegraphics[width=4cm]{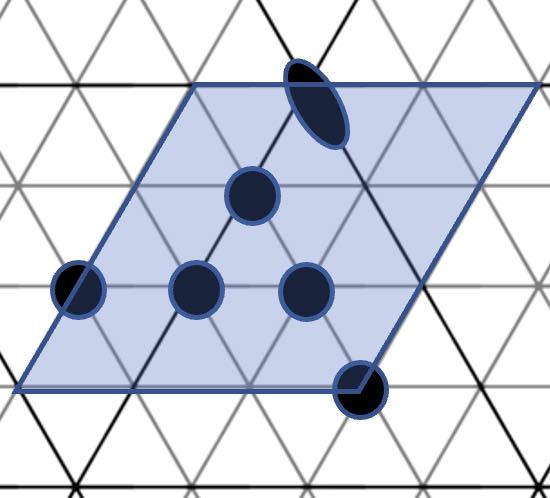}
        \caption{\small(\textit{a}) 
        A possible initial configuration with emphasized the \emph{floor} (dashed line); 
                        (\textit{b}) 
        a possible evolution of the configuration shown in (a) with an expanded particle. The shaded parallelogram is the minimum bounding box containing all the particles.}
        \label{fig:es1}
\end{figure}
\begin{figure}[t]
        \centering
  a) \includegraphics[width=3cm]{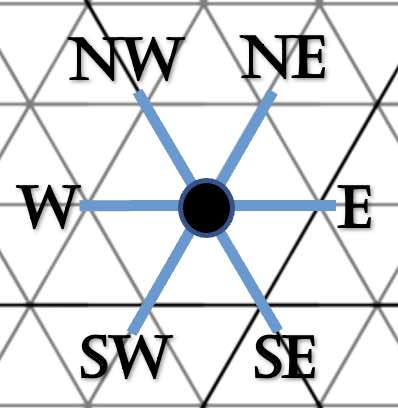}	
    	\hspace{5mm}        
  b) \includegraphics[width=3cm]{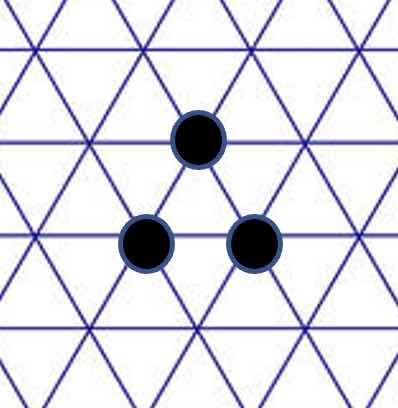}
		\hspace{5mm}        
  c) \includegraphics[width=3.6cm]{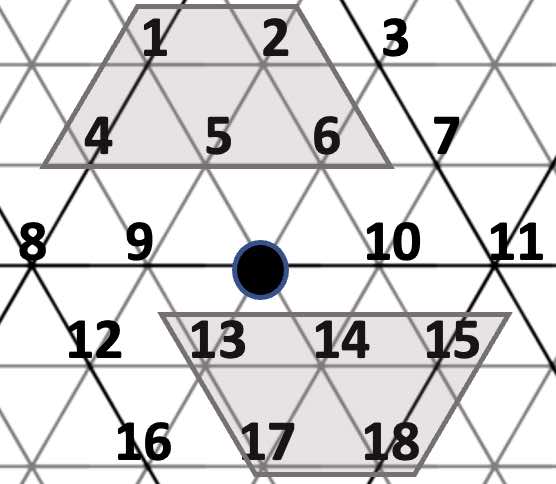}
        \caption{\small(\textit{a}) 
        A representation of the orientation of a particle;
        (\textit{b}) An initial configuration where Line formation is unsolvable within \silbot;
        (\textit{c}) Enumerated visible neighborhood of a particle; the two trapezoids emphasize two relevant areas for the definition of the resolution algorithm.}
        \label{fig:orientation}
\end{figure}

A relevant property that is usually required in such systems concerns connectivity. 
A configuration is said to be \emph{connected} if the set of nodes occupied by particles induce a connected subgraph of the grid.

\begin{definition}\label{def:initial}
A configuration is said to be \emph{initial}, if all the particles are {\cont} and connected.
\end{definition}

\begin{definition}\label{def:linear}[Line formation]
Given an initial configuration, the \emph{Line formation} problem asks for an algorithm that leads to a configuration where all the particles are {\cont}, connected and aligned.
\end{definition}

\begin{definition}
Given a configuration $C$, the corresponding \emph{bounding box} of $C$ is the smallest parallelogram with sides parallel to the West--East and SouthWest--NorthEast directions, enclosing all the particles.
\end{definition}

See Figure~\ref{fig:es1}.b for a visualization of the bounding box of a configuration. Note that, in general, since we are dealing with triangular grids, there might be three different bounding boxes according to the choice of two directions out of the three available. As it will be clarified later, for our purposes we just need to define one by choosing the West--East and SouthWest--NorthEast directions. In fact, as we are going to see in the next section, in order to solve Line formation in \silbot, we need to add some capabilities to the particles. In particular, we add a common orientation to the particles. As shown in Figure~\ref{fig:orientation}.a, all particles commonly distinguish among the six directions of the neighborhood that by convention are referred to as the cardinal points \NW, \NE, \W, \E, \SW, and \SE. 

Furthermore, in order to describe our resolution algorithm, we need two further definitions that identify where the particles will be aligned.

\begin{definition}
Given a configuration $C$, the line of the triangular grid containing the southern side of the bounding box of $C$ is called the \emph{floor}.
\end{definition}

\begin{definition}\label{def:final}
A configuration is said to be \emph{final} if all the particles are {\cont}, connected and lie on floor.
\end{definition}

By the above definition, a final configuration is also initial.
Moreover, if a configuration is final, then Line formation has been solved. Actually, it might be the case that a configuration satisfies the conditions of Definition~\ref{def:linear} but still it is not final with respect to Definition~\ref{def:final}. This is just due to the design of our algorithm that always leads to solve Line formation on floor. 

\section{Impossibility results}\label{sec:prel}
\begin{table}[t]
\center
\begin{tabular}{lllcc}
\textit{Problem} & \textit{Schedule} & \textit{View} & \textit{Orientation} & \textit{Reference}\\ \hline
Leader Election \ \  & \async              & 2 hops                 & no                   & \cite{DDDNP20b}      \\
Scattering       & ED-\async \ \          & 1 hop                  & no                   & \cite{NPBT23}       \\
Coating          & \async              & 2 hops                 & chirality            & \cite{NP23}    \\
Line formation	 & \async				  & 2 hops				   & yes					  & \bf{this paper}\\ \hline
\end{tabular}
\caption{Literature on \silbot.}\label{tab:problems}
\end{table}
As shown in the previous section, the \silbot model is very constrained in terms of particles capabilities. Since its first appearance~\cite{DDDNP20}, where the Leader Election problem has been solved, the authors pointed out the need of new assumptions in order to allow the resolution of other basic primitives. In fact, due to the very constrained capabilities of the particles, it was not possible to exploit the election of a leader to solve subsequent tasks. The parameters that can be manipulated have concerned the type of schedule, the hop distance from which particles acquire information, and the orientation of the particles. Table~\ref{tab:problems} summarizes the primitives so far approached within \silbot and the corresponding assumptions. Leader Election was the first problem solved when introducing \silbot~\cite{DDDNP20b}. Successively, the Scattering problem has been investigated~\cite{NPBT23}. It asks for moving the particles in order to reach a configuration where no two particles are neighboring to each other. Scattering has been solved by reducing the visibility range to just 1 hop distance but relaxing on the schedule which is not \async. In fact, the ED-\async schedule has been considered. It stands for Event-Driven Asynchrony, i.e., a particle activates as soon as it admits a neighboring particle, even though all subsequent actions may take different but finite time as in \async. For Coating~\cite{NP23}, where particles are required to surround an object that occupies some connected nodes of the grid, the original setting has been considered apart for admitting chirality, i.e., a common handedness among particles.

In this paper, we consider the Line formation problem, where particles are required to reach a configuration where they are all aligned and connected. About the assumptions, we add a common orientation to the particles to the basic \silbot model. The motivation for endowing the particles with such a capability comes by the following result:

\begin{theorem}
\label{th:imposs}
Line formation is unsolvable within \silbot, even though particles share a common chirality.
\end{theorem}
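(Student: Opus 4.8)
The plan is to prove this by a symmetry (indistinguishability) argument, exhibiting a single initial configuration and a single adversarial schedule on which \emph{every} deterministic algorithm fails. The crucial feature of \silbot to exploit is that particles are anonymous, oblivious and silent, so each particle's action is a deterministic function of its $2$-hop view alone. I would take as the hard instance the configuration $C_0$ consisting of three mutually adjacent particles occupying the corners of a single triangular face (this is precisely the configuration sketched in Figure~\ref{fig:orientation}.b); it is connected and all-{\cont}, hence initial. Let $o$ be the centroid of that face and let $\rho$ denote the $120^\circ$ rotation about $o$. The key geometric point is that $o$ is a face centroid of the triangular grid, not a grid node, so $\rho$ maps the grid onto itself but fixes \emph{no} node, and it cyclically permutes the three particles of $C_0$.

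The core of the argument is an equivariance lemma: under any deterministic algorithm $A$, particles in the same $\rho$-orbit always behave in a $\rho$-related way. Since a rotation preserves handedness, sharing a common chirality does \emph{not} distinguish $\rho$-related particles — their $2$-hop views are identical up to $\rho$ — and therefore if $A$ tells a particle $p$ to expand toward a direction $d$, it tells $\rho(p)$ to expand toward $\rho(d)$, and likewise for contractions and for the ``stay put'' decision. I would then construct a fair \async schedule that keeps the three particles perfectly synchronized in orbit, advancing them through the same phase (sense / commit / move) together, and prove by induction on activations that every reachable configuration is $\rho$-invariant. The construction is legal precisely because $\rho$ has no fixed node: if $p$ moves onto node $u$ then $\rho(p)$ moves onto $\rho(u)$, and a collision would require $u=\rho(u)$, which never happens; hence the adversary never has to invoke the symmetry-breaking tie rules for simultaneous expansions/contractions toward a shared node, and the commitment constraint is respected symmetrically as well.

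Finally, I would close the argument geometrically. At every instant of this execution the three occupied nodes form a single $\rho$-orbit $\{v,\rho v,\rho^2 v\}$ with $v\neq o$; three points related by a $120^\circ$ rotation about a common center are the vertices of a nondegenerate equilateral triangle and are therefore never collinear. Consequently a configuration in which all particles are {\cont} and \emph{aligned} is never reached, whether $A$ eventually halts (a non-line terminal configuration) or keeps moving forever. Since $A$ was arbitrary, Line formation is unsolvable on $C_0$ even under common chirality. I would also remark that this is exactly what motivates endowing the particles with a full common orientation rather than mere chirality: a global compass is \emph{not} invariant under $\rho$ (it permutes the labels \NW,\NE,\W,\E,\SW,\SE), so it destroys the equivariance that drives the impossibility.

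The main obstacle I anticipate is not the geometry but making the adversarial \async schedule fully rigorous: one must verify that keeping the orbit synchronized is consistent with the non-atomicity of sensing, computing and moving, with the irrevocable-commitment rule, and with fairness, and one must confirm that the transient (partly-{\expd}) configurations arising \emph{during} a move remain $\rho$-invariant so that the induction hypothesis for the equivariance lemma is never broken. The clean resolution of all these concerns hinges on the single observation that the center of symmetry is a face centroid with no fixed node, which guarantees that symmetric particles never contend for the same target and so the adversary is never forced to break the symmetry it is trying to maintain.
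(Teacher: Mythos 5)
Your proposal is correct and takes essentially the same approach as the paper: the same hard instance (the three-particle triangle of Figure~\ref{fig:orientation}.b) and the same argument that a synchronous adversary preserves the $120^\circ$ rotational symmetry, which is incompatible with collinearity since that would require distinguishing a middle particle. Your write-up is simply a more rigorous elaboration (the face-centroid/no-fixed-node observation, the equivariance induction, and the explicit non-collinearity of a rotation orbit) of what the paper states informally.
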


\begin{proof}
The proof simply comes by providing an instance where Line formation cannot be accomplished within the provided assumptions. By referring to Figure~\ref{fig:orientation}.b, we note that even if particles share chirality, they are all indistinguishable. No matter the algorithm designed for solving Line formation, an adversary may activate all particles synchronously so that they all behave symmetrically to each other. Hence, any action performed by a particle will be applied by all of them in a symmetric way. It means that any reachable configuration maintains the initial symmetry. Since a configuration solving Line formation for the provided instance requires to distinguish a particle which lies between the other two, we conclude that such a solution cannot be achieved.\qed
\end{proof}

Note that, the arguments provided in the proof of Theorem~\ref{th:imposs} can be extended to any configuration where the initial symmetry is `not compatible' with the formation of a line.

Motivated by Theorem~\ref{th:imposs}, we assume a common orientation to the particles. Consequently, each particle can enumerate its neighborhood, up to distance of 2 hops, as shown in Figure~\ref{fig:orientation}.c. This will be useful for the definition of the resolution algorithm. Actually, it remains open whether it is possible to design an algorithm even when particles share just one direction instead of the full orientation. 

\section{Algorithm \algo}\label{sec:algo}
The rationale behind the name \algo of the proposed algorithm comes by the type of movements allowed. In fact, the evolution of the system on the basis of the algorithm mimics the behavior of particles that fall down 
like drops of rain subject to a westerly wind. 
The Line formation is then reached on the lower part of the initial configuration where there is at least a particle -- what we have called \emph{floor}. 

In order to define the resolution Algorithm \algo, we need to define some functions, expressing properties related to a node of the grid. We make use of the enumeration shown in Figure~\ref{fig:orientation}.c, and in particular to the neighbors enclosed by the two trapezoids.

\begin{definition}
Given a node $v$, the next Boolean functions are defined:
\begin{itemize}
\item \emph{Upper}$(v)$ is \emph{true} if at least one of the visible neighboring nodes from $v$ at positions $\{1,2,4,5,6\}$ is occupied by a particle;
\item \emph{Lower}$(v)$ is \emph{true} if at least one of the visible neighboring nodes from $v$ at positions $\{13,14,15,17,18\}$ is occupied by a particle;
\item \emph{Pointed}$(v)$ is \emph{true} if there exists a particle $p$ occupying a node $u\in N(v)$ such that $p$ is {\expd} along edge $(u,v)$;
\item \emph{Near}$(v)$ is \emph{true} if there exists an empty node $u \in N(v)$ such that \emph{Pointed}$(u)$ is true.
\end{itemize}	
\end{definition}

For the sake of conciseness, sometimes we make use of the above functions by providing a particle $p$ as input in place of the corresponding node $v$ occupied by $p$.

We are now ready to formalize our Algorithm \algo.

\begin{algorithm}[h]
\caption{{\algo}. \label{algo}}
\begin{algorithmic}[1]
{\small
\Require{Node $v$ occupied by a {\cont} particle $p$.}
\Ensure{Line formation.}
	\If{$\neg$\emph{Near}$(v)$}
	    \If{\emph{Pointed}$(v)$}
		\State $p$ expands toward \E\label{algo:move1}
		\Else{
	       \If{$\neg$\emph{Upper}$(v)$ $\wedge$ \emph{Lower}$(v)$}
	       \State $p$ expands toward \SE\label{algo:move2}
           \EndIf}
        \EndIf
   	\EndIf
}
\end{algorithmic}
\end{algorithm}

It is worth noting that Algorithm \algo allows only two types of expansion, toward \E or \SE. Moreover, the movement toward \E can happen only when the node $v$ occupied by a particle is intended to be reached by another particle, i.e., \emph{Pointed}$(v)$ holds. Another remarkable property is that the algorithm only deals with expansion actions. This is due to the constraint of the \silbot model that does not permit to intervene on {\expd} particles, committed to terminate their movement. An example of execution of \algo starting from the configuration of Figure~\ref{fig:es1}.a is shown in the next section.

\section{Running example}\label{sec:run}
In this section, we show a possible execution of Algorithm \algo, 
starting from the configuration shown in Figure~\ref{fig:es1}.a (or equivalently by starting directly from the configuration shown in Figure~\ref{fig:es2}.a). 
Being in an asynchronous setting, there are many possible executions that could occur. In our example, we consider the case where all the particles that can move according to the algorithm apply the corresponding rule. It is basically an execution subject to the fully synchronous schedule (which is a special case of \async).

From the considered configuration of Figure~\ref{fig:es1}.a, Algorithm \algo allows only the particle on top to move. In fact, considering the node $v$ occupied by such a particle, we have that \emph{Near}$(v)$, \emph{Pointed}$(v)$ and \emph{Upper}$(v)$ are all \emph{false}, whereas \emph{Lower}$(v)$ is true. Note that, none of the nodes occupied by the other particles imply function \emph{Upper} to be true but the leftmost for which function \emph{Lower} is false. Hence, the configuration shown in Figure~\ref{fig:es1}.b is reached, eventually. After the movement of the {\expd} particle, see Figure~\ref{fig:es2}.a, the configuration is basically like an initial one with {\cont} and connected particles. The only movement occurring in initial configurations is given by Line~\ref{algo:move2} of Algorithm \algo. In fact, when there are no {\expd} particles, only Line~\ref{algo:move2} can be activated, as Line~\ref{algo:move1} requires function \emph{Pointed} to be true for a node occupied by a {\cont} particle. From the configuration of Figure~\ref{fig:es2}.a, there are two particles -- the top ones, that can move according to the algorithm. If both are activated, configuration of Figure~\ref{fig:es2}.b is obtained.
    Successively, the rightmost {\expd} particle is free to move, whereas the other {\expd} particle allows the pointed particle to expand, as shown in Figure~\ref{fig:es2}.c, by means of Line~\ref{algo:move1} of the algorithm.

\begin{figure}[ht]
        \centering
  a) \includegraphics[width=3.5cm]{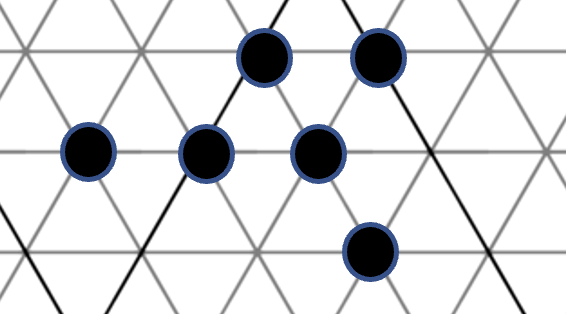}  
		\hspace{.5mm}        
  b) \includegraphics[width=3.5cm]{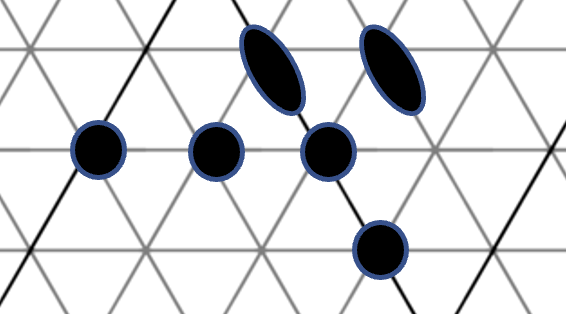}
          \hspace{.5mm}    
  c) \includegraphics[width=3.5cm]{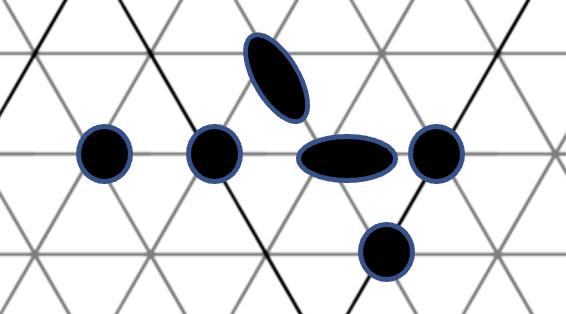}\\ \ \\ \  \\
  d)  \includegraphics[width=3.5cm]{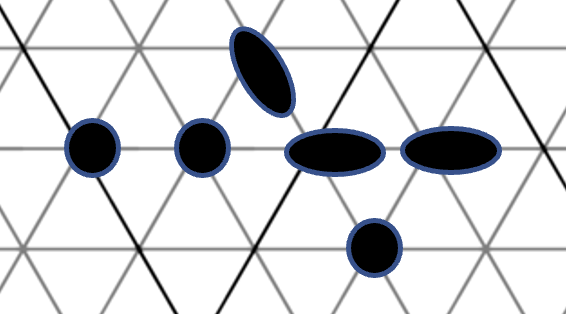}
          \hspace{1mm}       
  e) \includegraphics[width=3.5cm]{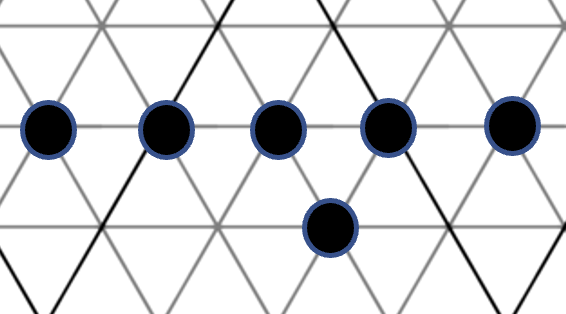}
  		\hspace{1mm}        
  f) \includegraphics[width=3.5cm]{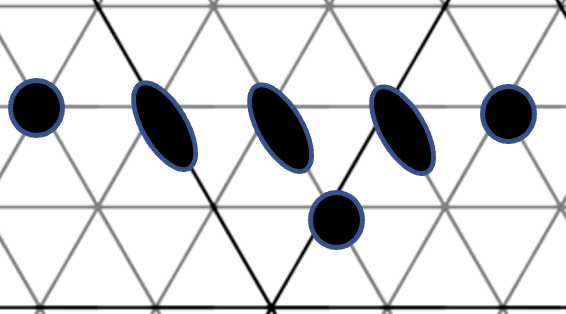}\\ \ \\ \ \\
  g)  \includegraphics[width=3.5cm]{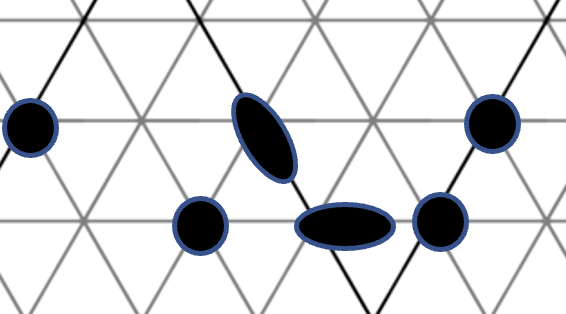}  
		\hspace{1mm}        
  h) \includegraphics[width=3.5cm]{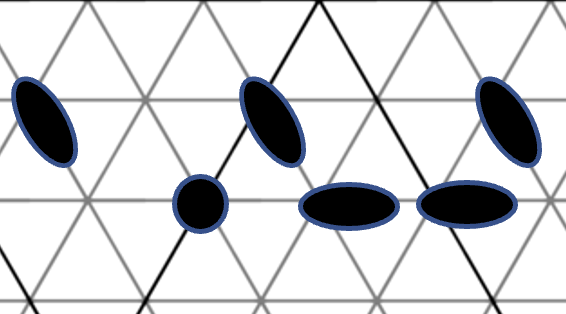}
          \hspace{1mm}        
  i) \includegraphics[width=3.5cm]{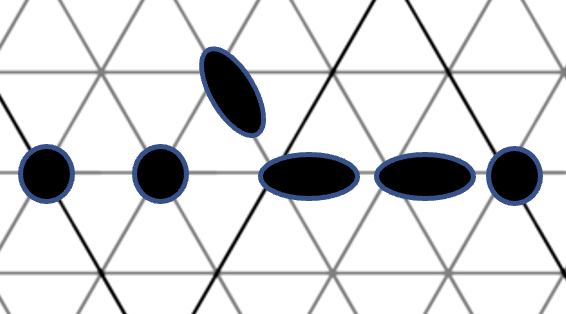}\\ \ \\ \ \\
    		\hspace{1mm}
  j)  \includegraphics[width=4cm]{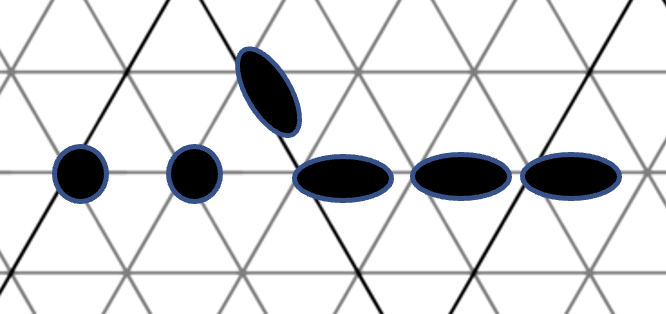}     
    		\hspace{1mm}     
  k) \includegraphics[width=4cm]{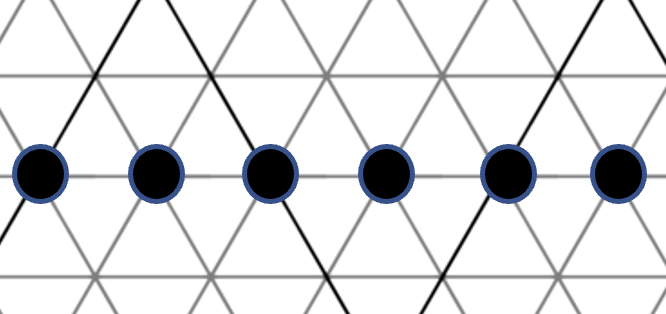}
        \caption{\small A possible execution when starting from the configuration shown in Figure~\ref{fig:es1}.a.}
        \label{fig:es2}
\end{figure}

As already observed, the movement toward \SE is generated by the rule at Line~\ref{algo:move2} of Algorithm \algo, whereas the movement toward \E can only be induced by {\expd} particles as specified by the rule at Line~\ref{algo:move1}. By keep applying the two rules among all particles, the execution shown in the subsequent Figures~\ref{fig:es2}.d--k is obtained, hence leading to the configuration where all particles are {\cont} and aligned along \emph{floor}. It is worth noting that the configuration shown in Figure~\ref{fig:es2}.g is disconnected.  However, as we are going to show, the possible disconnections occurring during an execution are always recovered. In particular, in the specific example, connectivity is recovered right after as shown in Figure~\ref{fig:es2}.i.

\section{Correctness and Optimality}\label{sec:corr}
In this section, we prove the correctness of Algorithm \algo as well as its optimality in terms of number of moves performed by the particles.

We prove the correctness of Algorithm {\algo} by showing that the four following claims hold:
\begin{description}
\item[Claim 1 - Configuration Uniqueness.] Each configuration generated during the execution of the algorithm is unique, i.e., non-repeatable, after movements, on the same nodes nor on different nodes;
\item[Claim 2 - Limited Dimension.] The extension of any (generated) configuration is confined within a finite bounding box of sides $O(n)$;
\item[Claim 3 - Evolution guarantee.] If the (generated) configuration is connected and not final there always exists at least a particle that can expand or contract;
\item[Claim 4 - Connectivity.] If two particles initially neighboring to each other get disconnected, they recover their connection sooner or later (not necessarily becoming neighbors).
\end{description}
 
The four claims guarantee that a final configuration is achieved, eventually, in finite time, i.e., Line formation is solved. In fact, by assuming the four claims true, we can state the next theorem.

\begin{theorem}
\label{th:correctness}
Given $n$ {\cont} particles forming a connected configuration, Algorithm {\algo} terminates in a connected configuration where all the particles are aligned along \emph{floor}.
\end{theorem}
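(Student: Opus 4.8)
The plan is to treat the four claims as given and combine them into a single termination-and-correctness argument. First I would bound the reachable state space: by Claim~2 every generated configuration is enclosed in a bounding box with sides $O(n)$, so it spans only finitely many grid nodes, and since each such node is either empty, occupied by a \cont particle, or occupied by an \expd particle along one of its six incident edges, only finitely many distinct configurations fit inside the box. By Claim~1 no configuration is ever repeated. Hence the system can perform only finitely many actual expansions and contractions, and after finite time it settles into a configuration $C^{*}$ from which no particle ever expands or contracts again.

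Second, I would show that $C^{*}$ cannot enable any movement in the sense required by Claim~3. Because $C^{*}$ never changes, no \cont particle's rule (Line~\ref{algo:move1} or Line~\ref{algo:move2}) can fire --- otherwise the fairness assumption would eventually activate that particle and make it expand, contradicting stability --- and no \expd particle can contract, since otherwise its target would be empty and fairness would again force it to complete the move. Thus in $C^{*}$ no particle can expand or contract. Taking Claim~3 in contrapositive form, a configuration in which no particle can move is either disconnected or final.

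Third, I would rule out the disconnected case with Claim~4. The initial configuration is connected, so any disconnection present in $C^{*}$ must have arisen during the execution; by Claim~4 such a disconnection is eventually repaired, which necessitates a further movement and contradicts the stability of $C^{*}$. Therefore $C^{*}$ is connected, and by the previous paragraph it must be final: all particles are \cont, connected, and aligned along \emph{floor}. As a final sanity check I would confirm that a final configuration is genuinely absorbing, i.e. that \algo leaves it untouched: for a particle on \emph{floor}, \emph{Pointed} and \emph{Near} are false in the absence of \expd particles, \emph{Upper} is false because nothing lies above the floor, and \emph{Lower} is false because the floor is the southern side of the bounding box, so neither rule of the algorithm applies.

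I expect the most delicate point to be the second paragraph, namely relating the purely behavioural notion ``the execution stops changing'' to the hypothesis of Claim~3, ``no particle can expand or contract.'' This is where asynchrony bites: an \expd particle is a committed-but-incomplete move whose completion depends on its target being vacated, so I must argue --- using fairness together with the commitment constraint of the \silbot model --- that a permanently frozen yet move-enabling configuration cannot occur. The remaining ingredients, finiteness from Claims~1 and~2 and connectivity of the terminal configuration from Claim~4, are then comparatively routine assemblies of the stated claims.
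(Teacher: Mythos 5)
Your proposal is correct and follows essentially the same route as the paper: Claims~1 and~2 give finiteness of the reachable configuration space, Claim~3 (used in contrapositive form, together with fairness) forces a stable configuration to be final or disconnected, and Claim~4 excludes the disconnected case, exactly as in the paper's own proof. Your version is somewhat more explicit about the role of fairness and about verifying that a final configuration is absorbing, but the decomposition and the use of each claim coincide with the paper's argument.
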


\begin{proof} By Claim 3 we have that from any non-final configuration reached during an execution of \algo there is always at least one particle that moves. Hence, by Claim 1, any subsequent configuration must be different from any already reached configuration. However, since Claim 2 states that the area where the particles move is limited, then a final configuration must be reached as the number of achievable configurations is finite. 
Actually, if we imagine a configuration made of disconnected and {\cont} particles, all lying on \emph{floor}, then the configuration is not final according to Definition~\ref{def:final} but none of the particles would move. However, by Claim 4, we know that such a type of configurations cannot occur, and in particular, if two particles initially neighboring to each other get disconnected, then they recover their connection, eventually. Since the initial configuration is connected, then we are ensured that also the final configuration is connected as well.\qed
\end{proof}

We now provide a proof for each of the above claims.

\begin{proof}[of Claim 1 - Configuration Uniqueness]

Since the movements allowed by the algorithm are toward either \E or \SE only, then the same configuration on the same nodes cannot arise during an execution as it would mean that some particles have moved toward \W, \NW, or \NE. Concerning the case to form the same configuration but on different nodes, it is sufficient to note that a particle lying on a node $v$ of \emph{floor} can only move toward \E (since \emph{Lower}$(v)$ is false, cf. Line~\ref{algo:move2} of Algorithm \algo).
Hence, either none of the particles on \emph{floor} move, in which case the same configuration should appear on the same nodes -- but this has been already excluded; or the same configuration may appear if all the particles move toward \E. However, based on the algorithm, the only movement that can occur from an initial configuration is toward \SE, hence the claim holds.	\qed
\end{proof}

\begin{proof}[of Claim 2 - Limited Dimension]

From the arguments provided to prove Claim 1, we already know that any configuration obtained during an execution of \algo never overpasses \emph{floor}, defined by the initial configuration. Moreover, since the movements are toward either \E or \SE only, then the northern and the western sides of the bounding box of the initial configuration are never overpassed as well. Concerning the eastern side, we show that this can be shifted toward east in the generated configurations at most $n$ times

About movements toward \SE that overpass the eastern side, they cannot happen more than $n-1$ times according to Algorithm \algo. In fact, each time it happens, the northern side moves toward south.

About the movement toward \E, it requires a pushing-like process by another particle that either comes from \W or from \NW. The claim then follows by observing that a particle can be pushed at most $n-1$ times, one for each other particle. In fact, if a particle $p$ is pushed toward \E, then the pushing particle $p'$ either comes from \W or from \NW, i.e., after the pushing $p$ and $p'$ are on the same WestEast axis. Hence, in order to push again $p$ toward \E, it is necessary that a third particle, $p''$ pushes $p'$ that in turn pushes $p$.
This may happen, for instance, if initially the particles are all aligned along the western side of the bounding box. Hence, by making the union of the bounding boxes of all the configurations obtained during an execution of \algo, the obtained box has the sides of size upper bounded by $n$. \qed
\end{proof}

\begin{proof}[of Claim 3 - Evolution guarantee]

Let us assume the configuration does contain a particle $p$, occupying node $v$, {\expd} toward node $u$. If $u$ is empty, then $p$ (or possibly another particle) will reach $u$, eventually. If $u$ is occupied, then the particle $p'$ in $u$ -- if not already {\expd}, will be pushed to move toward \E. In any case, there must be a particle at the end of a chain of {\expd} particles that either expands itself or moves toward the empty node toward which it is expanded. In any case, the configuration evolves.

Let us consider then the case where all the particles are {\cont} and connected.
If all the particles lie on \emph{floor}, then the configuration is final. Hence, if the configuration is not final, there must exist a particle $p$ occupying a node $v$ which is not on \emph{floor} such that, $\neg Near(v)~\wedge~\neg Pointed(v)~\wedge~\neg Upper(v)~\wedge~ Lower(v)$ holds, i.e., according to Algorithm \algo, $p$ expands toward \SE. The existence of $p$ is guaranteed by the fact that $\neg Near(v)~\wedge~\neg Pointed(v)$ clearly holds since none of the particles are {\expd}, whereas $\neg Upper(v)~\wedge~ Lower(v)$ holds for at least one of the topmost particles that of course does not admit neighboring particles on top, but admits particles below, due to connectivity. \qed
\end{proof}

\begin{proof}[of Claim 4 - Connectivity]

Let us consider two neighboring particles $p$ and $p'$ of the initial configuration. Without loss of generality, let us assume that the two particles become disconnected due to the movement of $p$ from node $v$ to node $u$. In fact, expansions do not cause disconnections as an {\expd} particle still maintains the node occupied.
If the movement is toward \E, then we are sure there is another particle {\expd} toward $v$, i.e., $v$ remains semi-occupied. Consequently, either $p'$ moves and recovers its connection with $p$ or another particle moves to $v$, again recovering the connection between $p$ and $p'$. Moreover, after its movement, $p$ cannot move again as long as $v$ remains semi-occupied since $Near(p)$ is true during that time; whereas, if $p'$ moves during that time (necessarily toward \E or \SE), it becomes neighbor of $p$ again. 

Then, the movement of $p$ must be toward \SE. According to Algorithm \algo, $p$ has decided to move toward \SE because: $Near(v)$ is false, i.e., none of the nodes in $N(v)$ is semi-occupied; $Pointed(v)$ is false; $Upper(v)$ is false and in particular the are no particles in positions $\{4,5,6\}$ according to the enumeration of its neighborhood shown in Figure~\ref{fig:orientation}.c; whereas there is at least one particle $p''$ among positions $\{13,15,17,18\}$. In fact, $14$ must be empty as $p$ is moving there. Hence, the movement toward $14$ makes $p$ neighboring $p''$. 
    It follows that, if the movement of $p$ has caused a disconnection from $p'$, then $p'$ is in position $9$, with respect to $v$, that represents the connection to $p$ before the movement. In fact, we know that positions $\{5,6\}$ are empty, whereas the movement to $14$ maintains $p$ neighboring with $\{10,13\}$, i.e., only the connection to $9$ can get lost. 
    Hence, $p'$ makes $Upper(p)$ true, and $p$ makes $Lower(p')$ true. It follows that $p$ won't move anymore unless another particle $\overline{p}$ (possibly arriving successively) pushes it from $v$ or from $13$. In either cases, $\overline{p}$ connects $p$ with $p'$. If $p$ doesn't move before $p'$, then $p'$ must move, eventually. In fact, this happens as soon as either it is pushed or the $Upper$ function evaluated from $9$ becomes false. By Claims 1, 2 and 3, this must happen, eventually, since the configuration is not final.\qed
\end{proof}

We are now ready to prove the optimality of Algorithm \algo in terms of number of total moves performed by the robots.

\begin{lemma}
\label{lemma:time}
Given $n$ {\cont} particles forming a connected configuration, Algorithm {\algo} terminates within $O(n^2)$ movements.
\end{lemma}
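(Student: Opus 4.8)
The plan is to charge every movement to an increment of a single scalar potential that stays confined to an interval of length $O(n)$. First I would embed the triangular grid in the plane so that the West--East axis is horizontal, and assign to each node its horizontal coordinate $x$; in the standard $60^\circ$ embedding, a step toward \E is the displacement $(1,0)$ and a step toward \SE is $(\tfrac12,-\tfrac{\sqrt3}{2})$. Since \algo only ever performs expansions toward \E or toward \SE, and both directions have a strictly positive eastward component, the $x$-value of any fixed particle is \emph{strictly increasing} along its own sequence of moves: each eastward step raises it by $1$ and each south-eastward step by $\tfrac12$, so every move raises it by at least $\tfrac12$. No move can ever decrease a particle's $x$-value, since expansions toward \W, \NW and \NE never occur.

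Next I would invoke Claim 2 (Limited Dimension): the union of the bounding boxes of all configurations encountered during an execution has West--East width $O(n)$. Consequently the $x$-value of every particle, at every instant, lies inside one common interval of length $O(n)$, so the total increase of any single particle's $x$-value over the whole execution is $O(n)$. Combined with the lower bound of $\tfrac12$ per move, this shows that each particle performs at most $O(n)$ movements; summing over the $n$ particles yields the claimed $O(n^2)$ bound on the total number of movements.

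The arithmetic is immediate, so the step that needs genuine care is justifying that one global quantity controls all particles simultaneously: one must verify that the $O(n)$-width strip furnished by Claim 2 contains the \emph{entire trajectory} of each particle throughout the run---not merely each configuration taken in isolation---and that the chosen potential is genuinely monotone for each particle. Both facts rest solely on the structural observation that \algo authorizes only expansions toward \E and \SE, which I would make explicit before carrying out the count. As a sanity check and a useful refinement, the same estimate splits cleanly: writing a particle's total horizontal displacement as $e\cdot 1+s\cdot\tfrac12\le O(n)$, where $e$ and $s$ count its eastward and south-eastward moves, bounds both $e$ and $s$ by $O(n)$ separately; alternatively the south-eastward moves can be counted through the vertical coordinate, which never increases, starts at most $n-1$ by connectivity, remains nonnegative because \emph{floor} is never overpassed (again by Claim 2), and drops by one at each such move, capping them at $n(n-1)$.
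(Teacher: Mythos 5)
Your proposal is correct, and it reaches the same per-particle $O(n)$ bound as the paper, but by a different accounting. The paper's proof is a one-line count: each particle moves at most $n-1$ times toward \E and at most $n-1$ times toward \SE, the justification being implicit in the proof of Claim~2 (the \E moves are bounded by the pushing argument --- each of the other $n-1$ particles can push a given particle at most once before a fresh pusher is needed --- and the \SE moves by the fact that each one lowers the particle toward \emph{floor}, which is never overpassed). You instead fold both move types into a single monotone scalar potential, the horizontal coordinate, which increases by at least $\tfrac12$ at every move and is confined to a strip of width $O(n)$ by Claim~2; this is not circular, since Claim~2 is established in the paper before and independently of the lemma. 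What your route buys is a unified argument with the dependence on Claim~2 made explicit, plus the observation (worth stating, as you do) that the strip must contain the whole trajectory of each particle, not just each configuration separately; what the paper's direct count buys is the sharper explicit constant of at most $2(n-1)$ moves per particle without any metric embedding. Your alternative count of the \SE moves via the non-increasing vertical coordinate is exactly the paper's implicit argument, so your two refinements together essentially reconstruct the paper's reasoning as a special case of your potential method.
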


\begin{proof}
In order to prove the lemma, it suffices to remark that any particle moves at most $n-1$ times toward \E and $n-1$ times toward \SE, hence obtaining a number of total movements upper bounded by $O(n^2)$. \qed
\end{proof}

\begin{theorem}
\label{th:optimality}
Algorithm {\algo} is asymptotically optimal in terms of number of movements.
\end{theorem}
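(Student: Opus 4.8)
The plan is to complement the upper bound of Lemma~\ref{lemma:time} with a matching lower bound. Since \algo terminates within $O(n^2)$ movements, it suffices to exhibit a family of initial configurations on which \emph{every} algorithm solving Line formation is forced to perform $\Omega(n^2)$ movements; together with Lemma~\ref{lemma:time}, this shows that the worst-case number of movements of \algo is of the optimal order $\Theta(n^2)$. First I would fix the right cost measure: a single movement is one expansion--contraction along one edge, hence it displaces exactly one particle by one hop. Therefore the total number of movements performed during an execution equals the total number of hops traveled by all particles, which in turn is at least the total Euclidean distance traveled, and at least the total displacement projected onto any fixed direction (each hop, being a unit grid step, projects to at most $1$).

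For the hard instance I would take the initial configuration to be as \emph{compact} as possible: a connected patch of the grid (for example a triangular or rhombic block) containing all $n$ particles and having diameter $\Theta(\sqrt{n})$, which is the smallest diameter achievable by $n$ nodes of the triangular grid. Any configuration solving Line formation places the $n$ particles on a single grid-line, so the final positions span an interval of length $n-1$ along one of the three grid directions. Now I would apply a transportation/projection argument: projecting the initial and final positions onto the direction $d$ of the final line, the $n$ initial positions all fall into an interval of length $O(\sqrt{n})$, whereas the final positions occupy $n$ distinct values spanning $n-1$. Hence at least a constant fraction of the particles must reach final projected coordinates at distance $\Omega(n)$ from the $O(\sqrt{n})$-wide initial band, and summing these displacements over $\Omega(n)$ particles yields a total projected displacement of $\Omega(n^2)$. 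By the cost measure above, the number of movements is $\Omega(n^2)$.

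The main obstacle I expect is making the bound \emph{robust against the algorithm's freedom}, namely the freedom to choose (i) the orientation of the final line among the three grid directions and (ii) the exact placement of the line together with the assignment of the (anonymous) particles to its positions. Obstacle (i) is handled precisely by choosing a balanced, compact initial cluster: since its projection onto \emph{any} of the three directions is compressed to width $O(\sqrt{n})$, the $\Omega(n^2)$ estimate holds uniformly over all possible final line orientations. Obstacle (ii) is handled by the transportation lower bound itself, which bounds from below the cost of \emph{any} matching of particles to the $n$ line positions, hence in particular the one realized by the algorithm. Combining the universal lower bound $\Omega(n^2)$ with the upper bound $O(n^2)$ of Lemma~\ref{lemma:time} establishes that \algo is asymptotically optimal in the number of movements, completing the proof.
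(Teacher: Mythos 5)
Your proposal is correct and follows essentially the same route as the paper: both combine the $O(n^2)$ upper bound of Lemma~\ref{lemma:time} with an $\Omega(n^2)$ lower bound obtained from a compact initial configuration of diameter $\Theta(\sqrt{n})$ that must be stretched into a line spanning $n-1$ positions. The only difference is that the paper delegates this lower bound to the cited result of~\cite{DGS15} (sketching the same compact-cluster argument), whereas you prove it self-containedly via the projection/transportation estimate, which is a valid and complete way of filling in that citation.
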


\begin{proof}
As proven in~\cite{DGS15}, Line formation requires $\Omega(n^2)$ movements.
That proof simply comes by assuming the initial configuration formed by $n$ particles composing a connected structure of diameter at most $2\sqrt{n} + 2$ (e.g., if they form a hexagonal or square shape), and then summing up all the necessary movements required to reach a configuration where particles form a line.
Hence, by combining such a result with Lemma~\ref{lemma:time}, the claim holds.\qed
\end{proof}

\section{Conclusion}\label{sec:concl}
We investigated on the Line formation problem within PM on the basis of the \silbot model.
With the aim of considering the smallest set of assumptions, we proved how chirality was not enough for particles to accomplish Line formation. We then endowed particles with a common sense of direction and we proposed \algo, an optimal algorithm -- in terms of number of movements, for solving Line formation. Actually, it remains open whether by assuming just one common direction is enough for solving the problem. Furthermore, although in the original paper about \silbot~\cite{DDDNP20b} it has been pointed out that 1 hop visibility is not enough for solving the Leader Election, it is worth investigating what happens for Line formation.

Other interesting research directions concern the resolution of other basic primitives, the formation of different shapes or the more general pattern formation problem.  Also variants on the original \silbot model deserve main attention. As shown in Table~\ref{tab:problems}, small modifications to the original model may allow the resolution of challenging tasks. It would be interesting, for instance, to understand what might change if {\expd} particles are allowed to revoke from their commitment on moving forward, i.e., if algorithms could deal also with {\expd} particles. 

Furthermore, adding a few bits of visible memory like allowing the particles to assume different states other than {\cont} and {\expd}, or being endowed with visible lights similar to those studied in robot systems as in~\cite{DDFN18}, might reveal higher potentials for PM.

\bibliography{bib}

\end{document}